\newenvironment{carlist}
 {\begin{list}{$\bullet$}
 {\setlength{\topsep}{0in} \setlength{\partopsep}{0in}
  \setlength{\parsep}{0in} \setlength{\itemsep}{\parskip}
  \setlength{\leftmargin}{0.07in} \setlength{\rightmargin}{0.08in}
  \setlength{\listparindent}{0in} \setlength{\labelwidth}{0.08in}
  \setlength{\labelsep}{0.1in} \setlength{\itemindent}{0in}}}
 {\end{list}}
\newcommand{\bcar}{\begin{carlist}}
\newcommand{\ecar}{\end{carlist}}
\newenvironment{carliste}
 {\begin{list}x
 {\setlength{\topsep}{0in} \setlength{\partopsep}{0in}
  \setlength{\parsep}{0in} \setlength{\itemsep}{\parskip}
  \setlength{\leftmargin}{0.07in} \setlength{\rightmargin}{0.08in}
  \setlength{\listparindent}{0in} \setlength{\labelwidth}{0.08in}
  \setlength{\labelsep}{0.1in} \setlength{\itemindent}{0in}}}
 {\end{list}}
\newcommand{\bcare}{\begin{carliste}}
\newcommand{\ecare}{\end{carliste}}
\newcommand{\realpos}{\ensuremath{\real_+}}
\long\def\@makecaption#1#2{
        \vskip 0.8ex
        \setbox\@tempboxa\hbox{\small {\bf #1:} #2}
        \parindent 1.5em  
        \dimen0=\hsize
        \advance\dimen0 by -3em
        \ifdim \wd\@tempboxa >\dimen0
                \hbox to \hsize{
                        \parindent 0em
                        \hfil 
                        \parbox{\dimen0}{\def\baselinestretch{0.96}\small
                                {\bf #1.} #2
                                } 
                        \hfil}
        \else \hbox to \hsize{\hfil \box\@tempboxa \hfil}
        \fi
        }
\long\def\comment#1{}
\def\@cite#1#2{[\if@tempswa #2 \fi #1]}
\long\def\barenote#1{
    \insert\footins{\footnotesize
    \interlinepenalty\interfootnotelinepenalty 
    \splittopskip\footnotesep
    \splitmaxdepth \dp\strutbox \floatingpenalty \@MM
    \hsize\columnwidth \@parboxrestore
    {\rule{\z@}{\footnotesep}\ignorespaces
      #1\strut}}}
\newcommand{\bit}{\begin{itemize}}
\newcommand{\eit}{\end{itemize}}
\newcommand{\ben}{\begin{enumerate}}
\newcommand{\een}{\end{enumerate}}
\newcommand{\bear}{\begin{eqnarray}}
\newcommand{\eear}{\end{eqnarray}}
\newcommand{\core}{{\operatorname{core}}}
\newcommand{\prob}{\ensuremath{{\mathbb{P}}}}
\newcommand{\ones}{\ensuremath{\mathbf{1}}}
\newcommand{\zeroes}{\ensuremath{\mathbf{0}}}
\newcommand{\beq}{\begin{quotation}}
\newcommand{\enq}{\end{quotation}}
\newcommand{\estart}{\begin{equation}}
\newcommand{\eend}{\end{equation}}
\newcommand{\defn}{\ensuremath{:  =}}
\newcommand{\wvec}{{\bf{w}}}
\newcommand{\Yvec}{{\bf{Y}}}
\newcommand{\Uvec}{{\bf{U}}}
\newcommand{\Wvec}{{\bf{W}}}
\newcommand{\wsca}{{{w}}}
\newcommand{\Ysca}{{{Y}}}
\newcommand{\bec}{\begin{center}}
\newcommand{\enc}{\end{center}}
\newcommand{\beit}{\begin{itemize}}
\newcommand{\enit}{\end{itemize}}
\newcommand{\been}{\begin{enumerate}}
\newcommand{\enen}{\end{enumerate}}
\newcommand{\comsl}{\begin{slide}}
\newcommand{\comspor}{\begin{slide*}}
\newcommand{\comsld}[2]{\begin{slide}[#1,#2]}
\newcommand{\comspord}[2]{\begin{slide*}[#1,#2]}
\newcommand{\mendsl}{\end{slide}}
\newcommand{\mendspo}{\end{slide*}}
\newcommand{\real}{\ensuremath{{\mathbb{R}}}}
\DeclareMathOperator{\sign}{sign}
\DeclareMathOperator*{\argmin}{arg\,min}
\newcommand{\Epsilonvec}{{\bf{E}}}
\newcommand{\eyemat}{\mathbf{I}}
\newcommand{\natbasis}{\mathbf{e}}
\newcommand{\diagmat}{{\textbf{Diag}}}
\newcommand{\vecto}[1]{\textbf{#1}}
\def\bi{\begin{itemize}}
\def\ei{\end{itemize}}
\def\benum{\begin{enumerate}}
\def\eenum{\end{enumerate}}
\def\be{\begin{equation*}}
\def\ee{\end{equation*}}
\def\l{\left}
\def\s{\mathbf{s}}
\newcommand{\Rmat}{\ensuremath{{\textbf{R}}}}
\newcommand{\Rmatexp}{\ensuremath{\text{Poly}(\Rmat,2)}}
\newcommand{\balpha}{\ensuremath{{\boldsymbol{\hat{\theta}^{\lambdavec}}}}}
\newcommand{\betadummy}{\ensuremath{\hat{\beta}^{\text{Dummy}}}}
\newcommand{\betahcadummy}{\ensuremath{\hat{\beta}^{\text{Dummy}}_{\text{hca}}}}
\newcommand{\betaols}{\ensuremath{\hat{\bbeta}^{\text{LS}}}}
\newcommand{\betahcaols}{\ensuremath{\hat{\alpha}^{\text{LS}}_{\text{hca}}}}
\newcommand{\betalamhca}{\ensuremath{\hat{\beta}^{\lambdavec}_{\text{hca}}}}
\newcommand{\betalam}{\ensuremath{\hat{\bbeta}^{\lambdavec}}}
\newcommand{\zzerolam}{\ensuremath{\hat{\beta}^{\lambdavec}_0}}
\newcommand{\zlam}{\ensuremath{\hat{\bz}^{\lambdavec}}}
\newcommand{\zlamrescaled}{\ensuremath{\hat{\bz}^{\lambdavec, \text{rescaled}}}}
\newcommand{\shrink}[1]{\textrm{S}_{#1}}
\newcommand{\pdim}{\ensuremath{p}}
\newcommand{\ddim}{\ensuremath{{d}}}
\newcommand{\numobs}{\ensuremath{n}}
\newcommand{\betastar}{\ensuremath{{\boldsymbol{\beta^*}}}}
\newcommand{\bbeta}{\ensuremath{{\boldsymbol{\beta}}}}
\newcommand{\bz}{{\textbf{z}}}
\newcommand{\Xmat}{\ensuremath{\textbf{X}}}
\newcommand{\betahat}{\ensuremath{\hat{\bbeta}}}
\newcommand{\lambdapm}{\text{SPR}}
\newcommand{\lambdavec}{\ensuremath{\vec{\lambda}}}
\newcommand{\absoluteerr}{\ensuremath{|\hat{E_i}|}}
\newcommand{\bstarhca}{{ {\alpha}_{\text{hca}}^*}}
\newcommand{\bhca}{ {\alpha_{\text{hca}}}}
\newcommand{\weightedloss}{{L_{\text{quadratic}}(\bhca, \bbeta)}}
\newcommand{\Ovec}{{\bf{O}}}
\newtheorem{thm}{Theorem}[section] 
\newtheorem{lem}[thm]{Lemma}
\title{Penalized Regression Models for the NBA}
\author{Dapo Omidiran \footnotemark}
\date{}
\begin{document}

\maketitle

\footnotetext[1]{
Dapo Omidiran is at the University of California, Berkeley, CA 94720 (e-mail: dapo@eecs.berkeley.edu).} 

\begin{abstract}

In the National Basketball Association (NBA), teams must make choices about which players to acquire, how much to pay them, and other decisions that are fundamentally dependent on player effectiveness. Thus, there is great interest in quantitatively understanding the impact of each player.
In this paper we develop a new penalized regression model for the NBA, use cross-validation to select its tuning parameters, and then use it to produce ratings of player ability. We then apply the model to the 2010-2011 NBA season to predict the outcome of games. We compare the performance of our procedure to other known regression techniques for this problem, and demonstrate empirically that our model produces substantially better predictions. We evaluate the performance of our procedure against the Las Vegas gambling lines, and show that with a sufficiently large number of games to train on our model outperforms those lines.
Finally, we demonstrate how the technique developed in this paper can be used to quantitively identify ``overrated" players who are less impactful than common wisdom might suggest.
\end{abstract}

{\bf Keywords:} Basketball; Penalized Regression; Ridge Regression; Lasso; Convex Programming

\tableofcontents

\section{Introduction}
The National Basketball Association (NBA) is a multi-billion dollar business. Each of the thirty franchises in the NBA try their best to put forward the most competitive team possible within their budget. To accomplish this goal, a key task is to understand how good players are.

A large fraction of the thirty NBA teams have quantitative groups analyzing data to evaluate and rate players. The website ESPN.com has many analysts providing statistical analysis for casual fans. Gambling houses use quantitative analysis to price bets on games, while gamblers try to use quantitative analysis to find attractive wagers.

A popular technique for producing player ratings is weighted least-squares (LS) regression\footnote{this technique is also known as Adjusted Plus/Minus (APM) in the quantitative basketball community.}. However, as we show later show, least squares is an approach with many flaws.

In this paper, we introduce a new penalized regression technique for estimating player ratings which we call Subspace Prior Regression (henceforth, $\lambdapm$). $\lambdapm$ corrects some of the flaws of least squares for this problem setting, and has substantially better out-of-sample predictive performance. Furthermore, given sufficient training data $\lambdapm$ outperforms the Las Vegas wagering lines.

We interpret the ratings produced by $\lambdapm$, discussing it identifies as the best players in the NBA (Section \ref{best_players}),
who are the most overrated and underrated players (Section \ref{underrated_analysis}),
and what $\lambdapm$ suggests is the relative importance of different basic actions within the game like three point shooting and turnovers
(Section \ref{bsweights_interpretation}).
Finally, we discuss some possible improvements to this model (Section \ref{future}).

\section{Notation}
We use the notation $\realpos^{d}$ to indicate the set
$\{ x \in \real^{d} \text{ } | x_i >0 \text{ } \forall i \}.$
Let $\ones_\numobs \in \real^{n \times 1}$
denote the column vector of ones and
$\natbasis_i \in \real^{n \times 1}$ signify the $i^{th}$ standard basis vector.
We use
$\eyemat_\pdim \in \real^{\pdim \times \pdim}$
to denote the identity matrix of size $\pdim$,
and $\diagmat(\vecto{w})$ to stand for a diagonal matrix with entries given by the vector $\vecto{w}$.
Given $\vecto{a}, \vecto{b} \in \real^{\numobs}$ and $\vecto{c} \in \realpos^n$ we define the inner product as
$
\vecto{a}^T \vecto{b} \defn \sum_{i=1}^{\numobs} a_i b_i,
$
the $\ell_p$ norm
$
||\vecto{a}||_{p} \defn \left[{\sum_{i=1}^{\numobs} a_i^p}\right]^{\frac{1}{p}}
$
and finally the $\vecto{c}$-weighted $\ell_p$ norm as
$
||\vecto{a}||_{p,\vecto{c}} \defn \left[{\sum_{i=1}^{\numobs} c_i a_i^p}\right]^{\frac{1}{p}}.
$

\section{A Brief Introduction to the Game of Basketball}
Each of the thirty teams in the NBA plays 82 games in a season, where 41 of these games are at their home arena and 41 are played away. Thus, there are 1,230 total games in an NBA regular season. Each team has a roster of roughly twelve to fifteen players. Games are usually 48 minutes long,
and each of the two competing teams has exactly five players on the floor at a time. Thus, there are ten players on the floor for the duration of the game.
Associated with each game is a box score, which records the statistics of the players who played in that game.
Figure \ref{boxscore} contains a sample box score from an NBA game played on February 2nd, 2011 by the Dallas Mavericks
 (the home team) against the New York Knicks (the away team). Note that we only display the box score for the Mavericks players.
Observe that there are 12 players listed in the box score, but only 11 who actually played for the Mavericks in this game. Each of the columns of this box score corresponds to a basic statistic of interest (the column REB
in the box score denotes rebounds, AST denotes steals, etc.)

\begin{figure}[!ht]
\caption{Sample single-game boxscore for the Dallas Mavericks}
\centering
\includegraphics[scale=.5]{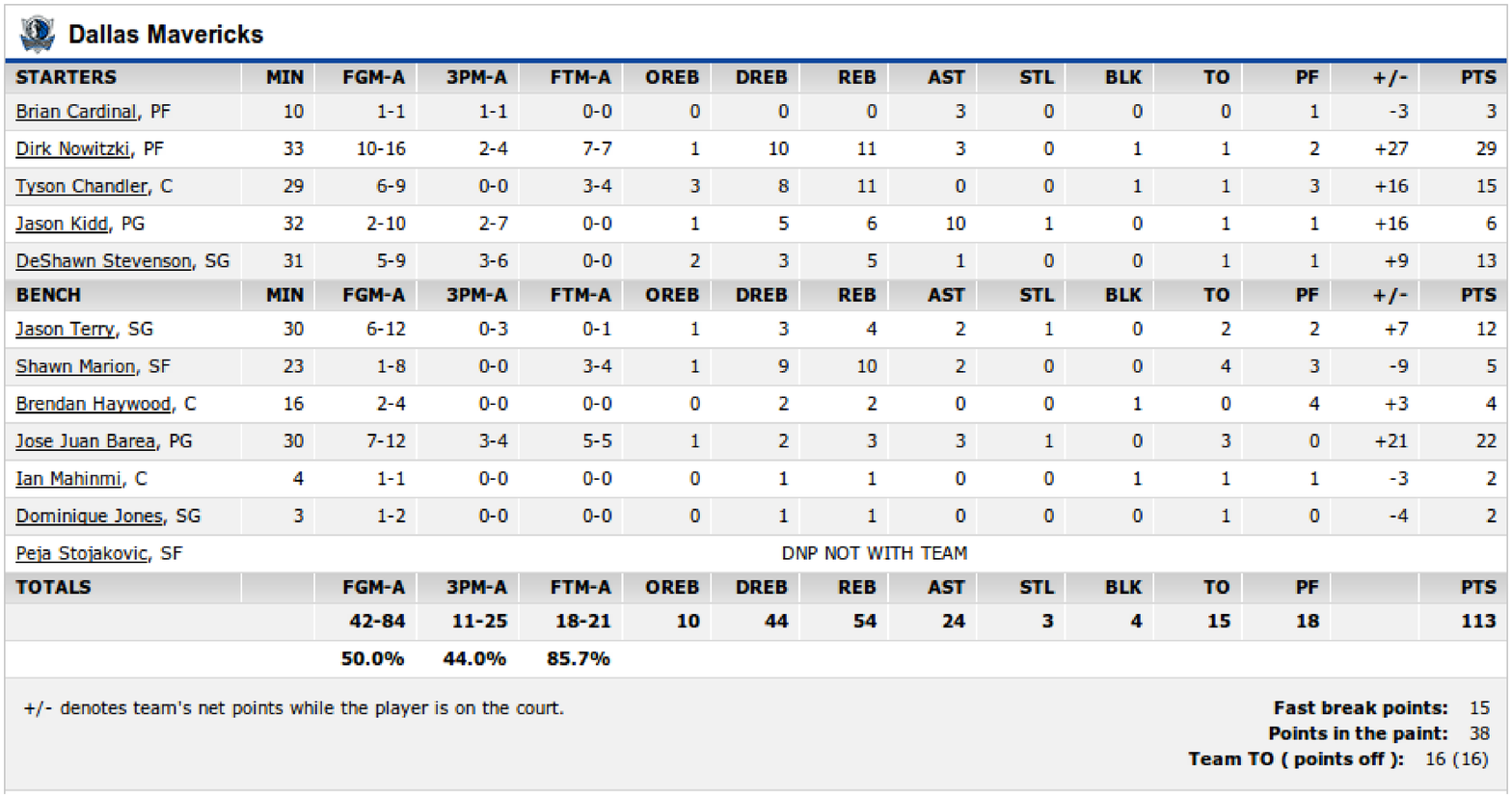}
\label{boxscore}
\end{figure}

\subsection{Statistical Modeling of Basketball}
To statistically model the NBA, we must first extract from each game a dataset suitable for quantitative analysis.
There is a standard
procedure for this currently used by many basketball analysts \citep{Kubatko07, Oliver2004}, which we describe as follows.

We model each basketball game as a sequence of $\numobs$ distinct events between two teams. During event $i$ the home team scores $\Ysca_i$ more points than the away team.
We use the variable $\pdim$ to denote the total number of players in the league (in
a typical NBA season, $\pdim \approx 450$.) We can then represent the current players on the floor for event $i$ with a vector $\Xmat_{i} \in \real^{\pdim}$ defined as

\begin{align*}
\Xmat_{ij} =
\begin{cases}
$1$ & \mbox{Player $j$ is on the floor for the home team} \\
$-1$ & \mbox{Player $j$ is on the floor for the away team} \\
0 & \mbox{otherwise.}
\end{cases}
\end{align*}

Associated with event $i$ is a weighting factor $\wsca_i$. Roughly speaking, the $i^{\text{th}}$ event happens for $\wsca_i$ minutes.

Figure \ref{boxscore} contains a sample box score.
We summarize box score data like that of Figure \ref{boxscore} with the matrix $\Rmat_{\text{Mavericks},\text{Game 1}}$ which looks like
\[
 \Rmat_{\text{Game \#1}}^{\text{Mavericks}} =
\bordermatrix{\text{}&\text{MIN}&\text{FGM}&\text{FGA}&\ldots &\text{PTS}\cr
                \text{Brian Cardinal}&10 &  {1} &{1}  & \ldots & 3\cr
                \text{Dirk Nowitzki}& 33  &  {10} & {16} & \ldots & 29\cr
                & \vdots & \vdots & \vdots &\ddots & \vdots\cr
                \text{Peja Stojakovic}& 0  &  0 & 0       &\ldots & 0}
\]

This matrix records the statistics of the 12 players on the Dallas Mavericks roster for that particular game.
If there are $\ddim$ basic statistics of interest in this box score, then $\Rmat_{\text{Game \#1}}^{\text{Mavericks}}$
is a matrix of size $12$ by $\ddim$.

One can imagine computing the aggregate box score matrix
\begin{align*}
\Rmat^{\text{Mavericks}} = \sum_{t=1}^{82}   \Rmat_{\text{Game \#t}}^{\text{Mavericks}}
\end{align*}

that summarizes the total statistics of these 12 players for an entire season.
Finally, define the $\pdim \times \ddim$ matrix $\Rmat$ that vertically concatenates $\Rmat_{j}$ across the $30$ teams in the NBA:
\[
 \Rmat \defn
\bordermatrix{\text{}&\text{}\cr
                \text{Team 1}&\Rmat^{\text{Mavericks}}\cr
                \text{Team 2}&\Rmat^{\text{Bulls}}\cr
                \vdots & \vdots\cr
                \text{Team 30}&\Rmat^{\text{Celtics}}
                }.
\]

$\Rmat$ summarizes the season box score statistics for all $\pdim$ players who played in the NBA for that year.


\subsection{Least Squares Estimation}

We want to determine the relationship between $\Xmat_i$ and $\Yvec_i$, i.e., find a function $f$ such that $\Yvec_i \approx f(\Xmat_i)$. One natural way to do this is through a linear regression model,
which assumes that

\begin{align*}
\Yvec_i = \bstarhca + \Xmat_i^{T} \betastar + e_i, i=1, 2, \ldots \numobs.
\end{align*}

Recall that the event $i$ has a weighting factor $\wsca_i$ associated with it. Roughly speaking, event $i$
happens for $\wsca_i$ minutes.
 
The scalar variable $\bstarhca$ represents a home court advantage term,
while the variable $\betastar \in \real^{\pdim}$ is interpreted as the number of points each of the $\pdim$ players in
the league ``produces" per minute.
This model recognizes players for whom their team is more effective because of their presence on the floor.

For notational convenience, we stack the variables $\Yvec_i$, $\wsca_i$, and $e_i$ into the $\numobs$ vectors $\Yvec$, $\Wvec$, and $\Epsilonvec$ and the variables $\Xmat_i$ into the $\numobs \times \pdim$ matrix $\Xmat$.
This yields the matrix expression
\begin{align}
\label{apm_model}
\Yvec = \ones_\numobs \bstarhca + \Xmat\betastar + \Epsilonvec.
\end{align}

Given observations $(\Yvec, \Xmat)$ and weights $\Wvec$, we define the $\Wvec$-weighted quadratic loss function as
\begin{align}
\label{quadraticloss}
L_{\text{quadratic}}(\bhca, \bbeta) \defn \frac{1}{\sum_{i=1}^\numobs \wsca_i} ||\Yvec-\ones_\numobs \bhca - \Xmat\bbeta||_{\Wvec}^2.
\end{align}

A natural technique for estimating the variables $\bstarhca$ and $\betastar$ is
to minimize \eqref{quadraticloss}, i.e.,
\begin{align}
\label{apm_algorithm}
(\betahcaols, \betaols) = \arg\min_{\bhca, \bbeta} L_{\text{quadratic}}(\bhca, \bbeta),
\end{align}

resulting in a weighted least squares (LS) problem.
The values $\betaols$ are known in the quantitative basketball community as the \emph{adjusted plus/minus ratings}\footnote{\url{http://www.82games.com/ilardi1.htm}}. The website Basketballvalue\footnote{\url{http://www.basketballvalue.com}} has computed $\betaols$ for several recent seasons.

\subsection{Is least squares regression a good estimator of player value?}
\label{apm_sucks}

\begin{table}[!ht]
  \centering
  \begin{threeparttable}
    \caption{LS Player Ratings}
      \begin{tabular}{l*{5}{c}r}
                    Rank & Player & $\betaols_{i}$\\
        \hline
        \DTLforeach{bbvalue}{%
        \firstname=Rank,\surname=Player,\score=Rating1}{%
        \firstname & \surname & \score\\}
      \end{tabular}
    \label{tab:my_table}
  \end{threeparttable}
\end{table}


Table \ref{tab:my_table} lists the top ten players in the NBA for the combined 2009-2010 and 2010-2011 NBA regular seasons by their ratings produced from least squares\footnote{These numbers were obtained from \url{http://basketballvalue.com/topplayers.php?&year=2010-2011}}. By this ranking, LeBron James was the best player in the league
over this two year period. Since $\betaols_{\text{LeBron James}}=12.62$, this procedure suggests that he is worth an additional 12.62 net points to his team for every 100 possessions the team plays.

How believable are the player ratings of Table \ref{tab:my_table}? The list has many of the widely-considered best players in the NBA. However, there are also some names on this list
that are questionable. If we believe these ratings, then Nick Collison, a player considered by most
fans and analysts to be at best a merely average player at his position, is better than Dywane Wade and Dwight Howard, two of the premiere superstars in the league. Similarly, while Nene Hilario
and Luol Deng are good players, they are not considered by most fans and analysts to be amongst the top ten players in the NBA.

This contradiction between common wisdom and least squares is useful, since it can either reveal to us
that the common wisdom is wrong or that the least squares approach is incorrect.
We need some basis of comparison to evaluate how well least squares is performing.

In classical linear regression, assuming that the generative model satisfies certain conditions, the least squares estimate has several desirable properties (maximum likelihood estimate, best linear unbiased estimate, consistency, asymptotic normality, etc).
However, these properties typically assume that the underlying model
satisfies certain technical conditions like normality, linearity, and statistical independence.
It is unreasonable to expect that these technical conditions hold for the game
of basketball. Thus, we must
find other ways to evaluate how trustworthy the $\betaols$ values are, and whether they should be believed over common wisdom about players.
One simple approach for evaluating the the least squares model is to test
its predictive power versus a simple dummy estimator.

To do this, we
\begin{enumerate}
\item define a dummy estimator that sets $\betadummy_i=0$ for each player,
and the home court advantage term $\betahcadummy=3.5$. In other words, each player is rated a zero, and the home team is predicted to win every
100 possessions by $3.5$ points.
\item We then can compute both the least squares estimate and dummy estimate for the first 820 games of an NBA season, and measure how well each technique does in estimating the margin of victory of the home team for the remaining 410 games of that season.
\end{enumerate}

If least squares accurately models the NBA, then at a minimum it must substantially outperform the dummy estimator.
Let us use the variable $A_k$ to denote the \emph{actual} number of points by which the home team wins game $k$, $\hat{A}_k$ to denote the \emph{predicted} number of points by the statistical estimator of interest,  and $\hat{E_k} \defn \hat{A}_k - A_k$ to denote the error this statistical estimator makes in predicting the outcome of game $k$.

Figure \ref{figdump} is a histogram of the error variable $\hat{E_k}$ over the course of the 410 games under consideration from the 2010-2011 NBA season for each technique. A perfect estimator would have a spike of height $410$ centered around zero. Thus, the ``spikier" the histogram looks, the better a method performs. It is hard to immediately say from Figure \ref{figdump} that the least squares estimate yields better predictions than the simple dummy estimate. We can also study some of the empirical properties of $\hat{E_k}$ for each approach. Table \ref{tab:hca_vs_apm} summarizes the results.

\begin{table}[!ht]
    \centering
    \caption{Performance of Statistical Estimators over the last 410 games}
  \begin{threeparttable}
        \begin{tabular}{l*{6}{c}r}
                Metric  & Dummy & LS & RR & \lambdapm & \lambdapm 2 \\
          \hline
          \DTLforeach{resultsnew}{%
          \firstname=metric,\surname=Dummy,\score=WLS,\lpm=Bamp-CV,\lpmtwo=Bamp-CV2,\ridge=Ridge}{%
          \firstname & \surname & \score & \ridge & \lpm & \lpmtwo\\}
        \end{tabular}
  \end{threeparttable}
    \label{tab:hca_vs_apm}

\end{table}

When comparing least squares to the dummy estimate, we notice that
\begin{enumerate}
\item least squares reduces the percentage of games in which the wrong winner is identified from $39.27\%$ to $33.66\%$ over the block
of 410 games of interest.
\item Unfortunately, the empirical behavior of $\hat{E_k}$ seems to be substantially worse for least squares. For example, the empirical mean of $\absoluteerr$ is $18.05$ for least squares, while only $10.54$ for the dummy estimator. Thus, least squares makes larger average errors
when predicting the final margin of victory of games.
\end{enumerate}

As a result, it is hard to convincingly argue that least squares approach is a better model for the NBA than the dummy estimate.

\begin{figure}[!ht]
\centering
\includegraphics[scale=0.45]{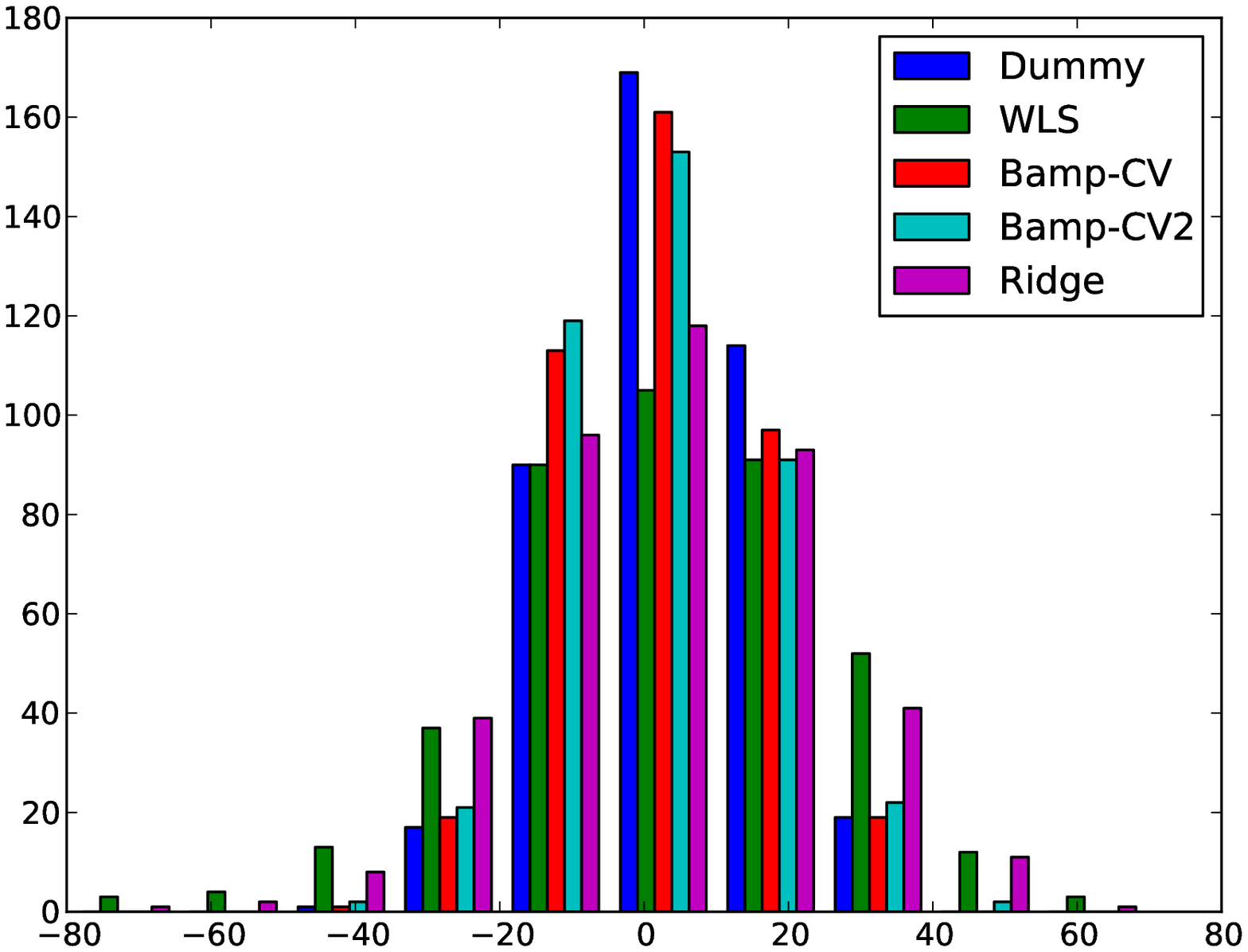}
\caption{Comparison of Dummy, Least Squares, Ridge Regression, $\lambdapm$ and $\lambdapm 2$ trained on 820 games.}
\label{figdump}
\end{figure}

\section{$\lambdapm$: Improving Least Squares}
\label{lamp_rocks}
Although Figure \ref{figdump} and Table \ref{tab:hca_vs_apm} suggest that the LS estimate performs poorly, this doesn't necessarily mean that the linear model \eqref{apm_model} is without promise. The least squares estimate simply doesn't take into account the following two key pieces of information we have about the problem domain:

\begin{enumerate}
\item Model sparsity: The NBA is a game dominated by star players. Lesser players have far less impact on wins and losses. This folk wisdom informs player acquisitions and salaries. For example, with a \$60 million
budget, one would much rather acquire three elite \$15 million stars and fill out the rest of the roster with cheap role-players, than spend tons of money on role-players and skimp on stars.

This ``elites first" strategy was used by the Boston Celtics in the summer of 2007 when they traded their role-players and other assets to build a team around Kevin Garnett, Paul Pierce and Ray Allen\footnote{\url{http://www.nba.com/celtics/news/press073107-garnett.html}}, and more recently
by the Miami Heat in the summer of 2010 who built a team around LeBron James, Dywane Wade and Chris Bosh\footnote{\url{http://sports.espn.go.com/nba/news/story?id=5365165}}. We shall incorporate this prior information through $\ell_1$ regularization. This penalizes non-sparse models, and should cause only the very best players to stand out in the regression.
This suggests a penalty term of the form
$\lambda_1 ||\bbeta ||_1$.

\item Box score information: Another valuable piece of information useful in inferring player worth is the box score statistics matrix $\Rmat$.
One expects good players to not only have high APM ratings, but to also produce rebounds, assists, blocks, steals, etc. Thus, we prefer
ratings $\hat{\bbeta}$ which are consistent with box score statistics.
In other words, we expect a ratings vector to be ``close" to the column
space of $\Rmat$. We therefore should penalize ratings for which
the distance from $\hat{\bbeta}$ to  $\Rmat \bz$ is large. Although there are many different possible penalties one can choose, in this work we choose
a quadratic penalty term of the form $\lambda_2 ||\bbeta - z_0\ones_\pdim - \Rmat \bz||_2^2$.
\end{enumerate}

We can encode the above prior information through the function
$g(\bhca, \bbeta, z_0, \bz; \lambdavec)$ defined as
\begin{align}
\label{lambdapm_objective_function}
g(\bhca, \bbeta, z_0, \bz; \lambdavec)
 \defn \underbrace{\weightedloss}_{\text{Weighted least squares}} + 
{{{\underbrace{\lambda_1 ||\bbeta ||_1}_{\text{Sparse player ratings}}}} } + 
{{{\underbrace{\lambda_2 ||\bbeta - z_0\ones_\pdim - \Rmat \bz||_2^2}_{\text{Box score prior}}}}},
\end{align}

\begin{enumerate}
\item $\Rmat$ is a $\pdim \times \ddim$ matrix containing the box-score statistics of the $\pdim$ different players,
\item The variable $\bz$ gives us weights for each of the box score statistics,
\item and the vector $(\lambda_1, \lambda_2) \in \realpos^2$ are the regularization parameters.
\end{enumerate}

We shall use the shorthand $\lambdavec$ to denote the pair $(\lambda_1, \lambda_2)$.
We can find a model consistent with both the data and the prior information by solving the convex optimization problem
\begin{align}
\label{my_cvx}
\betalamhca, \betalam, \zzerolam, \zlam = \arg\min g(\bhca, \bbeta, z_0, \bz; \lambdavec).
\end{align}

We call the procedure described by Equation \eqref{my_cvx} the $\lambdapm$ algorithm, and the vector $\betalam$ are the player ratings produced by it.
One very important difference between $\lambdapm$ and the least squares approach is that it yields both a player
rating vector $\betalam$ and a box score weights vector $\zlam$.
The weights vector $\zlam$ is a valuable tool in its own right. It provides numerical values for different basic box score statistics like scoring, rebounding, and steals.
We further interpret
$\zlam$ in Section \ref{bsweights_interpretation}.

Furthermore, it yields a linear formula for transforming
player box score effectiveness into a player productivity rating through the equation

\begin{align}
\label{bsrating}
\balpha \defn \Rmat \zlam + \zzerolam\ones_\pdim.
\end{align}

$\balpha$ can be viewed as an additional player rating vector produced by $\lambdapm$, one that linearly
transforms each player's box score production into a ``points per 100 posession" rating similar to least squares or $\betalam$. Thus,
$\balpha$ succintly converts the box score production of each player into a single number.
 
 Thus for player $i$, we can compare the variable $\betalam_i$ to the variable $\balpha_i$ to understand how ``overrated" or underrated he is relative to his box score production. This is useful, since many players produce great box score statistics but don't necessarily impact team competitiveness to the level the box score might suggest. We explore this aspect of $\lambdapm$ in further detail in Section \ref{underrated_analysis}.

\subsection{Bayesian Interpretion of $\lambdapm$}
$\lambdapm$ can be interpreted as the posterior mode for a Bayesian statistics model.
Suppose that $\Yvec, \bhca, \bbeta, z_0, \bz$ are all random variables.

Let
\bit
\item $\Yvec_i | \bhca, \bbeta \sim \mathcal{N}(\bhca + \Xmat_i^T \bbeta, \frac{1}{2})$,
\item $\bhca$ have the improper prior $\prob(\bhca=\alpha) \propto 1$
\item $\prob(\bbeta | z_0, \bz) \propto e^{-\lambda_1 ||\bbeta ||_1 - \lambda_2 ||\bbeta - z_0 \ones_\pdim - \Rmat \bz ||_2^2}$,
\item $z_0$ have the improper prior $\prob(z_0=\gamma) \propto 1$,
\item and $\bz$ has the improper prior $\prob(\bz=\kappa) \propto 1$.
\eit

Then the solution to $\lambdapm$ with $\wvec = \ones_\numobs$ is exactly the mode
of the posterior distribution $\prob(\bhca, \bbeta, z_0, \bz| \Yvec)$.

\subsection{Selecting the regularization parameter $\lambdavec$}
\label{appendix_cv}
For $\lambdapm$ to be useful, we need to be able to select a good choice of $\lambdavec$ quickly.
Cross-validation \citep{Stone1974} is one standard technique in statistics for doing this. To select regularization parameters, we use $10$-fold cross-validation. We cross-validate
over regularization parameters from the set
\begin{align*}
\Lambda := \{ (2^{a}, 2^{b}) | \text{ } a,b \in F \}
\end{align*}

where
\begin{align*}
F \defn \{-10, -9, \ldots, 9\}.
\end{align*}

\begin{table}
  \caption{Regularization parameters obtained from 10-fold cross-validation}
\centering
\begin{tabular}{l*{3}{c}r}
              Setting & $\lambda_1$ & $\lambda_2$ \\
\hline
\DTLforeach{lamvals}{
\firstname=setting,\foo=lam1,\surname=lam2}{%
\firstname & \foo & \surname \\}
\end{tabular}
\label{tab:lamvals}
\end{table}

$K$-fold cross-validation on $T$ different values of $\lambdavec$ means solving $T K$ different $\lambdapm$ problems, each
of which are convex programs of moderate size ($\numobs \approx 20000$, $\pdim \approx 450$, $\ddim \approx  20$).

Thus, it is necessary that
\begin{enumerate}
\item for each fixed value of $\lambdavec$, $\lambdapm$ can be solved quickly
\item  and that many values of $\lambdavec$ can be evaluated at once.
\end{enumerate}

To address the first issue, we implemented a fast numerical algorithm for solving
$\lambdapm$ for a fixed valued of $\lambdavec$. See Appendix \ref{appendix_cyclical}
for a derivation.

To address the second issue, our cross-validation code takes advantage of the cloud computing service PiCloud\footnote{\url{http://www.picloud.com}} to perform the computations in parallel.

The resulting regularization parameters learned by cross-validation are summarized in Table \ref{tab:lamvals}.

\section{The Performance of $\lambdapm$}
Our ultimate goal is to produce substantially better estimates of player value than least squares. If it turns out that despite all the additional computational work that $\lambdapm$ requires that there is little or no statistical improvement, then $\lambdapm$ is not of much practical value.
In this section, we discuss the performance of $\lambdapm$ on the 2010-2011 NBA dataset.  
We demonstrate that $\lambdapm$ substantially outperforms both the dummy estimate and least squares estmate, and outperforms even
Las Vegas given a sufficient amount of training data.

\subsection{$\lambdapm$ outperforms least squares}
\label{lamp_rocks_apm}
From Table \ref{tab:lamvals}, we see that the cross-validation methodology described in Section \ref{appendix_cv} on the first 820 games of the 2010-2011 season yields the regularization parameter
\begin{align*}
\lambdavec^{820}_{CV} = (2^{-10}, 2^{-3}).
\end{align*}

Armed with this choice, we can now compare least squares to $\lambdapm$ on the final 410 games of the 2010-2011 NBA regular season. Each procedure produces a player rating vector $\betahat$, and we can use these ratings to predict the final margin of victory over this collection of games. 

Recall that we use the variable $\hat{A}_i$ to denote the number of points that a statistical estimator predicts that the home team will win game $i$, $A_i$ to denote the \emph{actual} number of points by which the home team wins game $i$, and $\hat{E_k} = \hat{A}_i - A_i$ to denote the difference between these quantities.

Figure \ref{figdump} is a histogram of the variable $\hat{E_k}$ for each technique.
It is clear from Figure \ref{figdump} that $\lambdapm$ produces better estimates than APM. The histogram of the $\lambdapm$ errors are ``spikier" around the origin than the APM errors. We can also study some of the empirical properties of the variable $\hat{E_k}$ for each approach. Table \ref{tab:hca_vs_apm} summarizes the results. 
As Table \ref{tab:hca_vs_apm} indicates, $\lambdapm$ represents a substantial improvement on APM in nearly all of these statistical measures.
In particular,
\bi
\item the fraction of games in which the wrong winner is guessed decreases from 33.66\% with LS to 28.54\% with \lambdapm; and
\item the average absolute error in predicting the margin of victory decreases from 18.05 to 10.554.
\ei

Comparing $\lambdapm$ to the dummy estimator, we
\bi
\item see an enormous improvement in ability to predict the winning team. The percentage of games in which the wrong winner is predicted falls from $39.27\%$ to $28.54\%$.
\item Both techniques obtain a similar average absolute error in predicting the margin of victory, with $10.54$ for the dummy estimator and $10.55$ with \lambdapm.
\ei

Overall, this suggests that $\lambdapm$ more accurately models the NBA than the least squares estimator.

\subsection{$\lambdapm$ outperforms Las Vegas}
\label{lamp_rocks_vegas}

To convincingly evaluate the performance of $\lambdapm$, we examine whether it actually results in a profitable gambling strategy against the Vegas lines.
In fact, we will compare the dummy, least squares and $\lambdapm$ estimators.
Given predictions by each of the above estimates, we have the following natural gambling strategy:
\begin{enumerate}
\item If the deviation $\Delta$ between the estimate's prediction of the outcome of a game and the Vegas lines is greater than 3, place
a bet on the team the estimator favors.
\item Otherwise, don't bet.
\end{enumerate}

Due to transaction costs that the sportbooking companies charge\footnote{The fee is called the ``vigorish" in the gambling community.} a gambling strategy must win more than roughly $52.5\%$ of the time to at least break even.
Table \ref{tab:gamble820} summarizes the result of this gambling rule for each of the three techniques of interest over the last $410$ games of the 2010-2011 NBA season.
The dummy-based gambling strategy places 263 bets on the 410 games and loses 3 more bets than it wins, for a winning percentage below 50\%, which is performance comparable to random guessing, and not enough to break even.
The least squares-based strategy has a winning percentage of 51.97\% on 356 bets made.
In comparison, $\lambdapm$ places wagers on 290 games and wins $57.24\%$ of these bets. This represents a very profitable betting strategy, and thus suggests that $\lambdapm$ more accurately models the NBA than major alternatives, including the estimators used by Las Vegas.
Finally, $\lambdapm$ obtains this improved performance while only having access to the first $820$ games of the regular season.

\begin{table}[!ht]
  \centering
  \begin{threeparttable}
    \caption{Betting Strategy over the last 410 games, $\Delta=3$}
      \begin{tabular}{l*{5}{c}r}
                    Statistic & Dummy & LS & RR & \lambdapm & \lambdapm 2 \\
      \hline
      \DTLforeach{gamble820}{%
      \firstname=metric,\foo=Dummy,\surname=WLS,\pore=Bamp-CV,\core=Bamp-CV2,\krr=Ridge}{%
      \firstname & \foo & \surname & \krr &  \pore & \core \\}
      \end{tabular}
  \end{threeparttable}
\label{tab:gamble820}
\end{table}

\subsection{Robustness of Results}
How sensitive is the $\lambdapm$ algorithm to our choice of training on the first 410 games? Does
the performance relative to the least squares estimate degrade if the estimators are trained on much fewer games?
To evaluate this, we train estimators on the first 410 games and then evaluate predictive power on the remaining 820 games.
From Table \ref{tab:lamvals}, we obtain the cross-validation selected regularization parameter 

\begin{align*}
\lambdavec^{410}_{CV} = (2^{-7}, 2^{-2}).
\end{align*}

We also compare against the Las Vegas predictions for that block of 820 games. Table \ref{tab:robust1} summarizes the results of this experiment. As before, $\lambdapm$ outperforms
both the Dummy estimator and LS.
Furthermore, by increasing $\Delta$ to $5$ (from the value $3$ used when training
on 820 games), $\lambdapm$ still leads to a successful betting strategy,
as Table \ref{tab:gamble410} shows.

\begin{table}[!ht]
\centering
  \begin{threeparttable}
  \caption{Robustness Experiment, First 410 Games}
    \begin{tabular}{l*{5}{c}r}
                  Metric & Dummy & LS & RR & \lambdapm \\
    \hline
    \DTLforeach{robust205}{%
    \firstname=metric,\surname=Dummy,\score=WLS,\pore=Bamp-CV,\bore=Ridge}{%
    \firstname& \surname & \score & \bore & \pore   \\}
    \end{tabular}
  \label{tab:robust1}
  \end{threeparttable}
\end{table}

\begin{table}[!ht]
  \centering
  \begin{threeparttable}
    \caption{Betting Strategy over the last 820 games, $\Delta=5$}
      \begin{tabular}{l*{5}{c}r}
                    Statistic & Dummy & LS & RR & \lambdapm \\
      \hline
      \DTLforeach{gamble410}{%
      \firstname=metric,\foo=Dummy,\surname=WLS,\pore=Bamp-CV,\krr=Ridge}{%
      \firstname & \foo & \surname & \krr &  \pore \\}
      \end{tabular}
  \label{tab:gamble410}
  \end{threeparttable}
\end{table}

\section{What does $\lambdapm$ say about the NBA?}
In the previous section, we evaluated the performance of $\lambdapm$ by testing its ability to predict the outcome of unseen games. In this section, we interpret the box score weights vector $\zlam$ and player rating vector $\betalam$ returned by $\lambdapm$, and discuss what they say about the NBA.

\subsection{Top 10 players in the league}
\label{best_players}

\begin{table}[!ht]
  \centering
  \begin{threeparttable}
    \caption{$\lambdapm$ Player Ratings}

      \begin{tabular}{l*{5}{c}r}
                    Player & $\betalam_{i}$\\
        \hline
        \DTLforeach{bamptop10}{%
        \surname=Name,\score=BAMPCV}{%
        \surname & \score\\}
      \end{tabular}
    \label{tab:my_table_bamp}
  \end{threeparttable}
\end{table}

From $\betalam$ we can extract a list of the top 10 players in the league who have played at least 10 possessions. Table \ref{tab:my_table_bamp} summarizes these results.
This list contains some of the most prominent star players in the league (LeBron James, Chris Paul, Dirk Nowitzki, Dwight Howard), thus agreeing with common basketball wisdom.
However, this ranking contradicts common basketball wisdom in the following ways:
\begin{enumerate}
\item The list noticeably omits Kobe Bryant, a player pop culture and common basketball wisdom considers one of the league's superstars. Yet $\lambdapm$ thinks very highly of Pau Gasol and Lamar Odom, two of Kobe Bryant's teammates who are individually credited far less for the success of the Lakers than Kobe is. 
\item The list includes Nene Hilario and Jeremy Evans, players who are not considered by most to be amongst the top 10 players in the league.
\end{enumerate}

\subsection{Top 10 most underrated and overrated players}
\label{underrated_analysis}
There are certain players in the NBA for whom their impact on the game seems to be far more (or less) than their raw box score production suggests.
$\lambdapm$ allows us to identify these players and quantify their impact by measuring the discrepancy between their $\lambdapm$ rating and their weighted box score ratings $\balpha_i$.

We define the underrated vector $\Uvec$ as
\begin{align*}
\Uvec \defn \betalam - \balpha.
\end{align*}.

Similarly, we can examine which players impact the game much less than their box score production suggests with the vector $\Ovec \defn -\Uvec$.

Table \ref{tab:underratedplayers} lists the top 10 most underrated/overrated players in the league relative to their box score production.
For at least a few of these players, it is easy to understand why box scores alone do a poor job of capturing their impact:
\begin{itemize}
\item Andris Biedrens is a severe liability offensively, due to both his inability to score outside of 5 feet of the basket and poor free throw shooting. This makes it much more difficult for his teammates to score, since his defender
can shift attention away from him and instead provide help elsewhere. Biedrens is also a liability defensively.
\item Goran Dragic is a point guard with a scoring mentality. While a ``shoot-first" point guard is not necessarily
harmful to a team, if he doesn't do a good enough job in setting up his teammates and creating easy scoring opportunities for them, it hurts his team's ability to score.
\end{itemize}

\subsection{Box score weights produced by $\lambdapm$}
\label{bsweights_interpretation}
The $\lambdapm$ regression also produces box score weights $\zlam$ that tell us the relative importance of the different box score statistics.
$\zlam$ gives us a method to linearly transform box score data into the player effectiveness rating $\balpha$ defined in Equation \ref{bsrating}.
For player $j$, the variable $\balpha_j$ is a weighted linear combination of his box score statistics.

We can examine each entry of the vector $\zlam$ to compare the relative importance of different box score
variables like rebounds, assists and steals.
Table \ref{tab:bsweights} summarizes the results. We also display the relevant row in the box score matrix $\Rmat$ for LeBron James, which we call $\Rmat_{\text{Lebron James}}^T$.

Examining this table, we see that LeBron James made two point shots at a rate of $7.83$ per 36 minutes, and attempted two point shots at a rate of $14.16$ per 36 minutes. The corresponding weightings from $\zlamrescaled$ are $3.38$ and $-1.54$ respectively, suggesting that overall LeBron's rating from his two point shooting is $7.83 \times 3.38+14.16\times-1.54 \approx 4.66$ points. In fact, from these weightings we can calculate that according to the $\lambdapm$ model all players in the league must hit their two point shots roughly $45\%$ of the time for their rating from two point shooting to be non-negative.

Interestingly enough, a similar calculation reveals that three point shots must only be hit at a roughly $14\%$ rate to break even. This is counterintuitive: naively one would believe that hitting two point shots $q$ percent of the time should be equivalent to hitting three point shots $\frac{2}{3} q$ of the time.
However, three point shooting increases the amount of spacing on the floor and perhaps missed
three point shots are easier to rebound for the offensive team.

According to this interpretation of the $\zlam$ variable turnovers are extremely costly, with the corresponding entry of $\zlamrescaled$ equal to $-.76$. Thus, LeBron's turnover rate of $3.34$ turnovers per 36 minutes hurts his rating box score rating by roughly $6.28$ points.

\begin{table}[!ht]
\centering
  \begin{threeparttable}
  \caption{Box Score Weights}
\begin{tabular*}{0.75\textwidth}{ l*{4}{c}r }
Statistic & Description	& $\zlam$  & $\Rmat_{\text{LeBron James}}^T$\\
\hline
\DTLforeach{weights1}{%
\s=Statistic,\d=Descriptions,\w=Weight,\l=LeBron}{%
\s & \d & \w & \l \\}
\end{tabular*}
  \label{tab:bsweights}
  \end{threeparttable}
\end{table}

\begin{table}[!ht]
  \centering
  \begin{threeparttable}
  \caption{Underrated/Overrated Players}
\small
{
	\begin{tabular}{l*{3}{c}r}
		      Player & $\betalam$ & $\Rmat \zlam + \zzerolam \ones_\pdim$ & Underrated \\ \hline
	\DTLforeach{underrated}{%
	\firstname=Name,\surname=BAMPCV,\score=BAMPCVBoxScore,\scorea=Discrepancy}{%
	\firstname & \surname & \score & \scorea\\}
	\end{tabular}
}
{
	\begin{tabular}{l*{3}{c}r}
		      Player & $\betalam$ & $\Rmat \zlam + \zzerolam \ones_\pdim$ & Overrated\\ \hline
	\DTLforeach{overrated}{%
	\firstname=Name,\surname=BAMPCV,\score=BAMPCVBoxScore,\scorea=Discrepancy}{%
	\firstname & \surname & \score & \scorea\\}
	\end{tabular}
}
  \label{tab:underratedplayers}
  \end{threeparttable}
\end{table}

\section{Extending $\lambdapm$ by augmenting the box score}
\label{future}
In this section, we discuss a possible extension to the $\lambdapm$ model.

The box score matrix $\Rmat$ keeps track of statistics like rebounds, assists, and steals.
However, one might imagine augmenting this basic box score matrix with products of raw statistics such as rebounds $\times$ assists, blocks $\times$ steals, turnovers $\times$ free throws made, etc.
By capturing some of these product statistics and incorporating them into $\lambdapm$, one might more accurately model the value of multifaceted players.

We expand the matrix $\Rmat$ to include all pairwise product of the basic variables.
If $\Rmat$ is a $\pdim$ by $\ddim$ matrix, this leads to a $\pdim$ by $\ddim + \binom{\ddim}{2}$ matrix called

\begin{align*}
\Rmatexp.
\end{align*}

Let us use the notations $\lambdapm(\Rmat)$ and $\lambdapm(\Rmat, 2)$ to denote $\lambdapm$ with the box score matrices
$\Rmat$ and $\Rmatexp$, respectively.
Applying the cross-validation procedure described in Section \ref{appendix_cv} on the first 820 games of the 2010-2011
produces the regularization parameter

\begin{align*}
\lambdavec^{820}_{CV, 2} = (2^{-10}, 2^{-1}).
\end{align*}

With this choice of parameter for the expanded box score matrix $\Rmatexp$, we can then empirically
compare its performance to that of the ordinary $\lambdapm$ algorithm using the basic box score matrix $\Rmat$.
Figure \ref{figdump} demonstrates the result of this experiment.
From this figure we see that the additional box score statistics don't seem to substantially improve performance.
The histogram of the $\lambdapm(\Rmat, 2)$ errors are fairly similar to the $\lambdapm(\Rmat)$ errors.
We can also study some of the empirical properties of the variable $\hat{E_k}$ for each approach. Table \ref{tab:hca_vs_apm}
summarizes the results. 

As Table \ref{tab:hca_vs_apm} indicates, $\lambdapm(\Rmat, 2)$ doesn't improve upon the predictive power of $\lambdapm(\Rmat)$.
The fraction of games in which the wrong
winner is guessed actually increases from $28.54\%$ to $29.51\%$, the average absolute error in predicting games
increases from 10.55 to 11.81.

A possible explanation for this poor statistical performance is that the pairwise
interaction terms that $\lambdapm(\Rmat, 2)$ models are too many, and thus the model
is overfitting.

\section{Conclusion}
We have introduced $\lambdapm$, a powerful new statistical inference procedure for the NBA. We compared the statistical performance of our approach to an existing popular technique based on least squares and demonstrate empirically
that $\lambdapm$ gives more predictive power.
We also compare $\lambdapm$ to the Las Vegas lines and show that with sufficient training data, $\lambdapm$ seems to better predict the NBA than Vegas.
We interpret the estimates produced by $\lambdapm$ and discuss what they suggest about who the best players in the NBA are,
and which players are overrated or underrated.
Finally, we discuss a possible extension to the $\lambdapm$ model.

\newpage

\appendix
\section{The Cyclical Coordinate Descent Algorithm for $\lambdapm$}
\label{appendix_cyclical}
There are a variety of techniques for solving the convex program \eqref{my_cvx}, including
interior-point methods \citep{Boyd02}, LARs \citep{efron04}, iteratively re-weighted least squares \citep{huber:1974},
approximating the $\ell_1$ term with a smooth function \citep{LeeLeeAbbNg06}
the sub-gradient method \citep{shor1985}, and Nesterov's proximal gradient method \citep{nesterov2004}.

Ultimately, we found experimentally that cyclical coordinate descent (CCD) \citep{friedman_cyclical, wu_cyclical} was the fastest for our problem.

The CCD method works by repeatedly optimizing the objective function viewed as a function of each variable with the others fixed. This idea gives a CCD algorithm for $\lambdapm$, Algorithm \ref{alg1}.

\begin{algorithm}
\caption{CCD$\lambdapm$($\Xmat$, $\Yvec$, $\Rmat$, \lambdavec, T)}
\label{alg1}
\begin{algorithmic}[1]
\STATE $\bhca(0) \gets 0$, $z_0(0) \gets 0$, $\bbeta(0) \gets \zeroes_{\pdim}$, $\bz(0) \gets \zeroes_{\ddim}$
\FOR{$i \in \{1, 2, \ldots, T\}$}
  \STATE \COMMENT{Optimize $\bhca$ with all other variables fixed}
  \STATE \COMMENT{Optimize $z_0$ with all other variables fixed}
  \FOR{$k \in \{1, 2, \ldots, \pdim\}$}
  \STATE \COMMENT{Optimize $\bbeta_k$ with all other variables fixed}
  \ENDFOR
  \FOR{$\ell \in \{1, 2, \ldots, \ddim\}$}
  \STATE \COMMENT{Optimize $\bz_\ell$ with all other variables fixed}
  \ENDFOR
\ENDFOR
\RETURN $\bhca(T), z_0(T), \bbeta(T), \bz(T)$
\end{algorithmic}
\end{algorithm}

\subsection{Convergence of Algorithm \ref{alg1}}
The correctness of this algorithm for minimizing the objective function \eqref{lambdapm_objective_function} follows from Lemma \ref{alg1conv}.

\begin{lem}
\label{alg1conv}
Let $\bhca^*, z_0^*, \bbeta^*, \bz^* \in \argmin g(\bhca, \bbeta, z_0, \bz; \lambdavec)$.

Then
\begin{align*}
\lim_{T \to \infty} g(\bhca(T), z_0(T), \bbeta(T), \bz(T)) = g(\bhca^*, z_0^*, \bbeta^*, \bz^*).
\end{align*}

Furthermore, when $\bhca^*, z_0^*, \bbeta^*, \bz^*$ is the unique global
minimum of $g$,
\begin{align*}
\lim_{T \to \infty} (\bhca(T), z_0(T), \bbeta(T), \bz(T)) = (\bhca^*, z_0^*, \bbeta^*, \bz^*).
\end{align*}
\end{lem}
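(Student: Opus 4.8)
The plan is to establish convergence of cyclical coordinate descent (CCD) by appealing to the general theory of coordinate descent for convex objectives that decompose as a smooth convex part plus a separable nonsmooth part. The first step is to verify that the objective $g$ has exactly this structure: the weighted least-squares term $\weightedloss$ and the box-score penalty $\lambda_2 \|\bbeta - z_0\ones_\pdim - \Rmat\bz\|_2^2$ are together a smooth (indeed quadratic) convex function of all the variables $(\bhca, z_0, \bbeta, \bz)$ jointly, while the remaining term $\lambda_1\|\bbeta\|_1 = \lambda_1 \sum_{k} |\beta_k|$ is convex and \emph{separable} across the coordinates $\beta_k$. This separability is the crucial structural fact, because the classical obstruction to coordinate descent on nonsmooth objectives — getting stuck at a non-stationary point along a ``diagonal'' direction — does not arise when the nonsmooth part is separable across the coordinates being cycled.

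Given this decomposition, the first main step is to argue monotonicity: each inner optimization in Algorithm \ref{alg1} minimizes $g$ exactly over one coordinate with the others fixed, so the sequence $g(\bhca(T), z_0(T), \bbeta(T), \bz(T))$ is nonincreasing. Since $g \ge 0$ is bounded below, this sequence converges to some limit $g^\infty \ge g(\bhca^*, z_0^*, \bbeta^*, \bz^*)$. The second step is to show $g^\infty$ equals the optimal value. For this I would invoke the result of \citet{tseng2001} (or the treatment in \citet{friedman_cyclical, wu_cyclical}) that block coordinate descent converges to a global minimizer for objectives of the form ``smooth convex plus separable convex,'' using that the smooth part has a unique minimizer along each coordinate direction so the per-coordinate updates are well defined. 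The argument extracts a convergent subsequence of iterates (after establishing the iterates lie in a bounded set, which follows from coercivity of $g$ whenever $\lambda_1, \lambda_2 > 0$ make the sublevel sets bounded), shows the limit point satisfies the coordinatewise optimality conditions, and then uses separability to conclude these coordinatewise conditions are equivalent to the global subdifferential optimality condition $0 \in \partial g$.

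The final claim — convergence of the iterates themselves when the minimizer is unique — follows quickly from the value convergence. If $(\bhca^*, z_0^*, \bbeta^*, \bz^*)$ is the unique global minimum, then by coercivity the iterates are bounded, so every subsequence has a further convergent subsequence; by continuity of $g$ and the value convergence just established, any such limit point achieves the optimal value and hence, by uniqueness, equals $(\bhca^*, z_0^*, \bbeta^*, \bz^*)$. A standard subsequence argument then forces the whole sequence to converge to this point.

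\textbf{The main obstacle} is the second step: ruling out that CCD stalls at a suboptimal point. The delicate issue is that coordinate descent on a general nonsmooth function can fail to reach the minimum, so the proof must lean explicitly on the separability of the $\ell_1$ term — verifying that the coordinatewise stationarity conditions, namely $0 \in \partial_{\beta_k} g$ for each $k$ together with the smooth stationarity in $(\bhca, z_0, \bz)$, assemble into the full condition $0 \in \partial g$. I would make this rigorous by writing the subdifferential of $g$ as the (Minkowski) sum of the gradient of the smooth part and the product of the coordinatewise subdifferentials of $\lambda_1|\beta_k|$, which holds precisely because the nonsmooth part is separable; the easy direction (value monotonicity and boundedness) is routine by comparison, so essentially all the mathematical content lives in this equivalence of stationarity conditions.
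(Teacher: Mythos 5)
Your proposal is correct and takes essentially the same route as the paper: the paper's proof consists of citing Proposition 5.1 of \cite{Tsen01a}, identifying the smooth part $f_0$ with $\weightedloss + \lambda_2\|\bbeta - z_0\ones_\pdim - \Rmat\bz\|_2^2$ and the separable nonsmooth parts $f_i$ with $\lambda_1|\bbeta_i|$ --- exactly the decomposition your argument is built on. The machinery you sketch (monotone descent, bounded subsequences, and the assembly of coordinatewise stationarity into $0 \in \partial g$ via separability of the $\ell_1$ term) is precisely what Tseng's proposition packages, and your final subsequence argument for the unique-minimizer case makes explicit what the paper leaves implicit after concluding that all limit points are global minima.
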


\begin{proof}
This is a direct consequence of Proposition 5.1 of \cite{Tsen01a}.
In particular, identify $f_0$ and $f_i, i=1, \ldots,\pdim$ of Proposition 5.1 with $\weightedloss + 
\lambda_2 ||\bbeta - z_0\ones_\pdim - \Rmat \bz||_2^2$ and $\lambda_1 |\bbeta_i|, i=1, \ldots, \pdim$, respectively. We observe that
\bit
\item
Assumption B1 of \cite{Tsen01a} is satisfied, since $f_0$ is continuous.
\item Assumption B2 of \cite{Tsen01a} is satisfied, since $f$ is convex
and non-constant on line segments.
\item
Assumption B3 is satisfied, since $f_i, i=1,\ldots, \pdim$ are continuous.
\item Assumption C2 is trivially satisfied.
\eit
Therefore, the conditions of Proposition 5.1 of \cite{Tsen01a}
are satisfied for Algorithm \ref{alg1} on the objective
function \eqref{apm_algorithm}.

Since \eqref{lambdapm_objective_function} has at least one global minimum and is convex, then we further conclude that the limit points of Algorithm \ref{alg1}
are global minima.

\end{proof}

\subsection{Computing the updates for Algorithm \ref{alg1}}
The updates for $\bhca(i), z_0(i), \bbeta(i), \bz(i)$ can be computed in closed form.

To compute $\bhca(i)$, we can optimize the objective function $g$ viewed as a function only
of the decision variable $\bhca$ by taking the derivative and setting it to zero.

This yields the update

\begin{align*}
\bhca \gets  \frac{1_{\numobs}^T \text{Diag}(w) \left[\Yvec - \Xmat \beta \right]}{1_{\numobs}^T w}.
\end{align*}

Similarly, for $z_0(i)$ we get the update
\begin{align*}
z_0 \gets  \frac{1}{\pdim} \ones_\pdim^T \left[\beta - \Rmat z \right].
\end{align*}

For $\bz$, we simply get the least squares updates:
\begin{align*}
\bz \gets (\Rmat^T\Rmat)^{-1} \Rmat^T [\bbeta - z_0 \ones_p].
\end{align*}

\subsubsection{Updates for $\bbeta$}
We next derive a closed-form expression for the updates for $\bbeta_i$. To do so, we need Lemma \ref{soft_thresh_lem}.

\begin{lem}[One-variable lasso is soft-thresholding]
\label{soft_thresh_lem}
Let $h(x) := \frac{1}{2}A x^2 - Bx + C + \tau |x|,  x \in \real$.
Suppose that $A>0$.
The solution of
\begin{align}
\label{softthresh}
\min_{x \in \real} h(x)
\end{align}

is

\begin{align*}
x^* = \frac{\shrink{\tau}(B)}{A}
\end{align*}

where
\begin{align*}
\shrink{\tau}(x) &\defn
\begin{cases}
0 &\mbox{if } |x| \leq \tau \\ 
x-\tau &\mbox{if } x > 0 \text{ and } |x| > \tau \\
x+ \tau &\mbox{if } x < 0 \text{ and } |x| > \tau,
\end{cases} 
\end{align*}

is the soft-thresholding function with threshold $\tau$.
\end{lem}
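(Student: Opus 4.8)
The plan is to exploit the strict convexity of $h$ and to characterize its unique minimizer through the first-order (subgradient) optimality condition $0 \in \partial h(x^*)$, then verify that the resulting $x^*$ coincides with $\shrink{\tau}(B)/A$ in each of the three regimes. Since $A > 0$, the term $\frac{1}{2}Ax^2$ makes $h$ strictly convex and coercive (it tends to $+\infty$ as $|x| \to \infty$), so a minimizer exists and is unique; it therefore suffices to exhibit a single point satisfying the optimality condition.

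First I would compute the subdifferential of $h$. Away from the origin, $h$ is differentiable with $h'(x) = Ax - B + \tau\,\sign(x)$, while at the kink the subdifferential of $|x|$ is the interval $[-1,1]$, so $\partial h(0) = \{-B + \tau s : s \in [-1,1]\} = [-B-\tau,\, -B+\tau]$. The condition $0 \in \partial h(x^*)$ then splits into cases according to the sign of the candidate minimizer.

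Next I would dispatch the cases. If $B > \tau$, solving $Ax^* - B + \tau = 0$ gives $x^* = (B-\tau)/A > 0$, consistent with the assumed positive sign; symmetrically, if $B < -\tau$ then $x^* = (B+\tau)/A < 0$. If instead $|B| \le \tau$, then $0 \in \partial h(0)$ holds, since $B \in [-\tau,\tau]$ forces $0 \in [-B-\tau,\,-B+\tau]$, so $x^* = 0$. In all three cases $Ax^*$ equals $B-\tau$, $B+\tau$, or $0$ according to whether $B > \tau$, $B < -\tau$, or $|B| \le \tau$, which is exactly the definition of $\shrink{\tau}(B)$; dividing by $A$ yields $x^* = \shrink{\tau}(B)/A$.

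The only delicate point is the nonsmooth kink at $x = 0$: one must use the subdifferential (or, equivalently, compare the one-sided derivatives) rather than naively setting a derivative to zero, because for $|B| \le \tau$ the minimizer sits precisely at the point of nondifferentiability. An equivalent route that avoids subgradients altogether is to minimize $h$ separately over the half-lines $x \ge 0$ and $x \le 0$ — on each, $h$ is a smooth strictly convex quadratic whose minimizer is either the interior stationary point or the boundary $x = 0$ — and then take the smaller of the two minima. This reproduces the same case split but is slightly more laborious, so I would present the subdifferential argument as the main proof.
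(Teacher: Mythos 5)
Your proof is correct and follows essentially the same route as the paper's: both characterize the minimizer via the subgradient optimality condition $0 \in \partial h(x^*)$ and split into cases according to whether $|B| \le \tau$ or not. The only (harmless) stylistic difference is that you invoke strict convexity to exhibit and verify a single candidate in each case, whereas the paper rules out the competing case by contradiction; the underlying argument is the same.
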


See Section \ref{proof_soft_thresh_lem} for a proof of this.

This lemma is useful, as it allows us to immediately write the updates for $\bbeta_k(i)$.

First, let us identify $A$ and $B$ for $\bbeta_k(i)$. Differentiating $g_s$, we get
\begin{align*}
\partial_{\bbeta_t} g_s &= 
\partial_{\bbeta_t} ( \weightedloss + \lambda_2 ||\bbeta - z_0 \ones_\pdim- \Rmat \bz||_2^2) \\
&= \partial_{\bbeta_t} \left[\frac{1}{\ones_n^T \vec{w}} \sum_{i} w_i (Y_i - \bhca - \Xmat_i^T \bbeta)^2\right]
+ \lambda_2 \partial_{\bbeta_t}  \sum_{j} (\bbeta_j - z_0 - \Rmat_j^T z)^2 \\
&= ( \frac{1}{\ones_n^T \vec{w}}  \sum_{i} w_i \partial_{\bbeta_t} (Y_i - \bhca - \Xmat_i^T \bbeta)^2
+ \lambda_2 \sum_{j} \partial_{\bbeta_t} (\bbeta_j - z_0 - \Rmat_j^T z)^2 \\
&= \frac{1}{\ones_n^T \vec{w}}  \sum_{i} 2 w_i \Xmat_{it} (-Y_i + \bhca + \Xmat_i^T \bbeta)
+ \lambda_2 2(\bbeta_t - z_0 - \Rmat_t^T z) \\
&= C  \sum_{i} w_i \Xmat_{it} (-Y_i + \bhca + \Xmat_i^T \bbeta)
+ \lambda_2 2(\bbeta_t - z_0 - \Rmat_t^T z) \\
&= C  (\Xmat e_t)^T W (-Y + \bhca \ones_n + \Xmat \bbeta)
+ \lambda_2 2(\bbeta_t - z_0 - \Rmat_t^T z) \\
&= C  (\Xmat e_t)^T W (-Y + \bhca \ones_n + \Xmat [\bbeta-e_t \bbeta_t + e_t \bbeta_t]
+ \lambda_2 2(\bbeta_t - \theta_t) \\
&= C  (\Xmat e_t)^T W (\kappa + \Xmat [e_t \bbeta_t])
+ \lambda_2 2(\bbeta_t - \theta_t)
\end{align*}

where
\begin{align*}
C &:= \frac{2}{\ones_n^T \vec{w}},\\
\theta_t &:= (z_0 \ones_p + \Rmat \bz)^T e_t,\\
\kappa &:=-Y + \bhca \ones_n + \Xmat [\bbeta-e_t \bbeta_t].
\end{align*}

The constant term (with respect to $\bbeta_t$) of the above expression is
\begin{align*}
D := C  (\Xmat e_t)^T W \kappa
- 2\lambda_2 \theta_t.
\end{align*}

The linear term is
\begin{align*}
C  (\Xmat e_t)^T W \Xmat e_t \bbeta_t
+ 2 \lambda_2 \bbeta_t &= \left[C  e_t^T \Xmat^T W \Xmat e_t
+ 2 \lambda_2\right] \bbeta_t \\
&= E \bbeta_t,
\end{align*}

where
\begin{align*}
E := C  e_t^T \Xmat^T W \Xmat e_t + 2 \lambda_2.
\end{align*}

From this we conclude that for $\bbeta_k(t)$
\begin{align*}
A = E \\
B := -D.
\end{align*}

So, we have the update equation
\begin{align*}
\bbeta_k(i) &\gets \frac{\shrink{\lambda_1}(B)}{A}.
\end{align*}

\subsubsection{Proof of Lemma \ref{soft_thresh_lem}}
\label{proof_soft_thresh_lem}
\begin{proof}
The subdifferential of $h(x)$ \citep{Rockafellar} is the set
\begin{align*}
\partial{h(x)} &:= \sum_{k=1}^K a_k (a_k x - b_k) + \tau \partial{|x|} \\
&= x A - B  + \tau \partial{|x|},
\end{align*}

where
\begin{align*}
A &\defn \sum_{k=1}^K a_k^2 \\
B &\defn \sum_{k=1}^K b_k \\
\partial{|x|} &\defn \begin{cases}
                        \{\sign(x)\} &\mbox{if } x \neq 0 \\ 
                       [-1, 1] &\text{otherwise.}
                    \end{cases}
\end{align*}

From the theory of convex analysis \citep{Rockafellar} $x^*$ is the solution of \eqref{softthresh} if and only if
\begin{align}
\label{subg_opt}
0 \in \partial{h(x^*)}.
\end{align}

The set $\partial{h(x^*)}$ behaves differently depending on the value of $x^*$. When $x^* \neq 0$, then
\begin{align*}
0 \in \partial{h(x^*)} = \{x^* A - B  + \tau \sign(x^*)\},
\end{align*}

which is equivalent to $x^* =\frac{B - \tau \sign(x^*)}{A}$. However, when $x^*=0$, then
\begin{align*}
0 \in \partial{h(x^*)} = \{x^* A - B  + \tau [-1,1]\}.
\end{align*}

We use this observation to deal with the following two cases:
\begin{enumerate}
\item Suppose that $\tau \geq |B|$. Then
\begin{enumerate}
  \item
      If $x^* \neq 0$, then
      \begin{align*}
      \tau &\geq |B| = |x^* A + \tau \sign(x^*)| \\
                       &= x^* A + \tau,
      \end{align*}
      since at least one $a_k \neq 0$, then $A >0$. This is a contradiction. Therefore $x^* \neq 0$ cannot be a solution
      when $\tau \geq |B|$.
  \item
      If $x^* = 0$, then
      \begin{align*}
        0 \in -B + \tau [-1, 1] = [-\tau-B, \tau-B],
      \end{align*}
      which is true.

\end{enumerate}
\item Suppose that $\tau < |B|$. Then
\begin{enumerate}
  \item
        If $x^* \neq 0$, then $B-\tau\sign(x)$ has the same sign as $B$. Since $A$ is positive,
        then $x^*$ has the same sign as $B$. So the choice of $x^* = \frac{B - \tau\sign(B)}{A}$ satisfies the required
sub-gradient optimality condition \eqref{subg_opt} without contradiction.
  \item
        If $x^* = 0$, then $0 \in -B + \tau [-1,1] = [-\tau-B, \tau-B]$, which is a contradiction. 
\end{enumerate}
\end{enumerate}

Thus,
\begin{enumerate}
\item $\tau \geq |B| \implies x^* =0$,
\item $\tau < |B| \implies x^* =\frac{\shrink{\tau}(B)}{A}$.
\end{enumerate}

These two cases can be summarized by $x^* = \frac{\shrink{\tau}(B)}{A}$, as desired.
\end{proof}

\bibliographystyle{plainnat}
\bibliography{new_dapo}
\end{document}